\newtheorem{theorem}{Theorem}
\newtheorem{lemma}{Lemma}
\title{Private-Capacity Bounds for Bosonic Wiretap Channels}
\author{\authorblockN{Ligong Wang, \em Member IEEE\/\rm, Jeffrey
    H. Shapiro, \em Life Fellow IEEE\/\rm,\\ Nivedita Chandrasekaran, Gregory 
    W. Wornell, \em Fellow IEEE} \authorblockA{Research Laboratory of Electronics,
    Massachusetts Institute of Technology, 
    Cambridge, MA, USA\\
    \texttt{\{wlg, jhs, nivedita, gww\}@mit.edu}}}
\begin{document}

\maketitle

\begin{abstract}
We prove an upper bound on the private capacity of the single-mode
noiseless bosonic wiretap channel. Combined with a previous lower
bound, we obtain the low photon-number asymptotic expression for the
private capacity. We then show that the multiple-mode noiseless
bosonic wiretap channel is equivalent to parallel single-mode
channels, hence the single-mode bounds can be applied. Finally, we
consider multiple-spatial-mode propagation through atmospheric
turbulence, and derive a private-capacity lower bound that only
requires second moments of the channel matrix.

\end{abstract}

\section{Introduction}

In a variety of emerging applications, there is a need for secure
transmission over optical links.  In such settings, a natural approach
to providing security against computationally-unbounded attacks is to
exploit the physical layer, and the corresponding natural information
theoretic model for analysis is the basic bosonic wiretap channel.  In
practice, a variety of regimes are of interest.  While for optical
links photon efficiency (b/photon) has tended to be of greater
importance than spectral efficiency (b/s/Hz), there is growing
interest in the latter as well.  When the photon and spectral
efficiency requirements are simultaneously high, multiple spatial
modes are required \cite{GuhaDuttonShapiro11}.  Accordingly there is a need to more
fully understand the capacity of both single-mode and multiple-mode
bosonic wiretap channels in such regimes.  Moreover, in practice,
free-space propagation is strongly affected by turbulence, the effect
of which on private capacity is also not yet well understood.

\iffalse
Optical communication at many bits/photon is possible with bandwidth
expansion, and at many bits/sec-Hz with bandwidth-efficient signaling,
but both such efficiencies cannot be achieved in single spatial-mode
operation.  Previous work for the vacuum-propagation channel has shown
that 10 bits/photon capacity at 5 bits/sec-Hz over a
vacuum-propagation channel requires 189 low-loss spatial modes at the
Holevo capacity, and 4500 modes at the capacity limit for on-off
keying modulation with a direct-detection receiver
\cite{GuhaDuttonShapiro11}, with neither heterodyne nor homodyne
detection reaching that photon-efficiency goal.  Vacuum propagation
does not preclude an eavesdropper's detecting part of the transmitted
light beam, so it is of interest to determine the private capacity
when high photon efficiency and high spectral efficiency are both
required.  Furthermore, because terrestrial free-space optical links
suffer the random effects of refractive-index turbulence, the private
capacity should be considered for such conditions.
\fi

In this paper, we first prove an upper bound on the private capacity
of the single-mode bosonic wiretap channel.  Combining our upper bound
with the previously-derived lower bound \cite{guhashapiroerkmen08}, we
obtain the single-mode private capacity's low photon-number asymptotic
behavior.  We then treat the multiple-mode bosonic wiretap channel,
obtaining results that tightly bracket the private capacity when both
high photon efficiency and high spectral efficiency are required.
Finally, we exploit convexity and majorization to obtain a lower bound
on the multiple-spatial-mode private capacity for the turbulent
channel that only requires second moments of the channel matrix, and
thus may be tight for near-field operation in which both high photon
efficiency and high spectral efficiency are obtained.

\section{Notation}

We use a lower-case letter like $x$ to denote a number, and an
upper-case letter like $X$ to denote a random variable (except for
some special cases, e.g., $C$ denotes the capacity). We use a boldface
lower-case letter like $\boldsymbol{x}$ or $\boldsymbol{\eta}$ to
denote a vector, and a boldface upper-case letter like
$\boldsymbol{X}$ or $\boldsymbol{H}$ to denote a random vector. We use
a font like $\mathsf{t}$ to denote a matrix, and a
corresponding upper-case letter like $\mathsf{T}$ to denote a random
matrix. Finally, we use a font like $\mathbb{A}$ to denote a Hilbert
space, a 
font like $\hat{a}$ to denote the annihilation operator on
$\mathbb{A}$, and $\hat{\rho}^{\mathbb{A}}$ to denote a density
operator on $\mathbb{A}$.

All logarithms in this paper are natural logarithms, and information
is measured in nats unless stated otherwise.

%%%%%%%%%%%%%%%%%%%%%%%%%%%%%%%%%%%%%%%%%%%

\section{The Single-Mode Channel}

\subsection{Channel Model and Previous Work}
Let
$\hat{a}$, $\hat{b}$, and $\hat{e}$ denote the annihilation operators
on the Hilbert spaces of Alice, Bob, and Eve, respectively. The
single-mode noiseless 
bosonic wiretap channel can be
described in the Heisenberg picture by the beam splitter relation
\begin{subequations}\label{eq:singlemode}
\begin{IEEEeqnarray}{rCl}
  \hat{b} & = & \sqrt{\eta} \hat{a}+\sqrt{1-\eta} \hat{v},\\
  \hat{e} & = & \sqrt{1-\eta} \hat{a}-\sqrt{\eta} \hat{v},
\end{IEEEeqnarray}
\end{subequations}
where $\eta\in[0,1]$, and where
$\hat{v}$ is the annihilation operator of the noise mode, which 
we assume 
to be in its vacuum state. Note that this is a \emph{worst-case}
model in the sense that we assume Eve can obtain all photons that do
not reach Bob. We impose an average-photon-number
constraint on the input
\begin{equation}
  \bra \hat{a}^\dag \hat{a} \ket \le \bar{n},\label{eq:constraint}
\end{equation}
where the expectation is averaged over all codewords.
Denote the classical private capacity of the channel
\eqref{eq:singlemode} under constraint \eqref{eq:constraint} by
$C_{\textnormal{P}}(\eta,\bar{n})$. It is shown in
\cite{guhashapiroerkmen08} that 
\begin{equation}\label{eq:lower}
  C_{\textnormal{P}}(\eta,\bar{n}) \ge L(\eta,\bar{n}),
\end{equation}
with
\begin{equation}\label{eq:L}
  L(\eta,\bar{n})=\begin{cases}g(\eta\bar{n})-g\left((1-\eta)\bar{n}\right),& \eta>1/2,\\0,&\textnormal{otherwise,}\end{cases}
\end{equation}
where
\begin{equation}\label{eq:g}
  g(x)\triangleq (1+x)\log(1+x)-x\log x,\quad x>0
\end{equation}
is the maximum entropy of a single-mode bosonic state whose expected
photon-number equals $x$, achieved by the \emph{thermal state}:
\begin{equation}
\hat{\rho}=  \sum_{n=0}^\infty \frac{x^n}{(x+1)^{n+1}} |n\ket\bra n|,
\end{equation}
where $|n\ket$ denotes the number state containing $n$ photons. 

It is conjectured in
\cite{guhashapiroerkmen08} that \eqref{eq:lower} holds with equality,
as a consequence of the conjectured ``Entropy Photon-Number
Inequality''.

As $\bar{n}$ tends to infinity, the lower bound \eqref{eq:lower} is
tight and agrees with the private-capacity formula derived in
\cite{wolfgarciagiedke06}:
\begin{equation}
  C_\textnormal{P}(\eta,\infty) = \max \left\{0,
    \log(\eta)-\log(1-\eta)\right\}.
\end{equation}

\subsection{An Upper Bound on $C_\textnormal{P}(\eta,\bar{n})$}

\begin{theorem}\label{thm:single}
  The classical private capacity $C_{\textnormal{P}}(\eta,\bar{n})$ is
  bounded by
  \begin{equation}\label{eq:upper}
    C_{\textnormal{P}}(\eta,\bar{n}) \le U(\eta,\bar{n}),
  \end{equation}
  where
  \begin{equation}\label{eq:U}
    U(\eta,\bar{n})\triangleq \begin{cases}
      g\left((2\eta-1)\bar{n}\right),& \eta>1/2,\\0,&
      \textnormal{otherwise.}\end{cases}
  \end{equation}
\end{theorem}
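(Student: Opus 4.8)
\emph{Proof proposal.}

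If $\eta\le 1/2$ the bound is immediate. Relation \eqref{eq:singlemode} shows that in this regime Bob's channel is a degraded version of Eve's: Eve can reproduce Bob's output by passing her mode through an auxiliary beam splitter of transmissivity $\eta/(1-\eta)\in(0,1]$. Hence $I(X;B^n)\le I(X;E^n)$ for every input distribution, no positive private rate is possible, and $C_{\textnormal{P}}(\eta,\bar n)=0=U(\eta,\bar n)$. From now on assume $\eta>1/2$. Write $\mathcal N_\eta$ for the Alice-to-Bob channel in \eqref{eq:singlemode} and $\mathcal N_{1-\eta}$ for the Alice-to-Eve channel (its complementary channel up to a phase). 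The plan is to bound $C_{\textnormal{P}}(\eta,\bar n)$ through the private-information formula $C_{\textnormal{P}}(\eta,\bar n)=\lim_{n\to\infty}\frac1n\sup\left[I(X;B^n)-I(X;E^n)\right]$, the supremum over ensembles $\{p_x,\hat\rho_x\}$ of $n$-mode inputs with average photon number at most $n\bar n$ and $X$ the classical message register.

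The structural fact I would use is that, because $\eta>1/2$, $\mathcal N_\eta$ is \emph{degradable}: composing it with a beam splitter $\mathcal D$ of transmissivity $(1-\eta)/\eta\le 1$ reproduces $\mathcal N_{1-\eta}$ (up to a phase). Fix a Stinespring isometry $W$ of $\mathcal D$; it maps Bob's mode $B$ into two modes $\tilde E$ and $F$ such that discarding $F$ leaves Eve's state, while --- by the same linear-optics computation --- $F$ is exactly the output of a pure-loss channel of transmissivity $2\eta-1$ acting \emph{directly} on Alice's input. In particular one $F$ mode carries mean photon number at most $(2\eta-1)\bar n$ under \eqref{eq:constraint}.

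Next I would apply $W^{\otimes n}$ to the $B^n$ part of the channel output state. Isometric invariance of mutual information, together with the fact that the $\tilde E^n$ marginal equals Eve's state, gives $I(X;B^n)-I(X;E^n)=I(X;\tilde E^n F^n)-I(X;\tilde E^n)=I(X;F^n|\tilde E^n)$. Expanding, $I(X;F^n|\tilde E^n)=S(F^n|\tilde E^n)-\sum_x p_x\left[S\!\left(\mathcal N_\eta^{\otimes n}(\hat\rho_x)\right)-S\!\left(\mathcal N_{1-\eta}^{\otimes n}(\hat\rho_x)\right)\right]$, and each bracket is the coherent information of the still-degradable channel $\mathcal N_\eta^{\otimes n}$ on $\hat\rho_x$, hence nonnegative. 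Therefore $I(X;B^n)-I(X;E^n)\le S(F^n|\tilde E^n)\le S(F^n)$ by subadditivity of the von Neumann entropy, and $S(F^n)\le n\,g\!\left((2\eta-1)\bar n\right)$ because an $n$-mode state of total photon number at most $(2\eta-1)n\bar n$ has entropy at most $n\,g\!\left((2\eta-1)\bar n\right)$ (maximum-entropy property of the thermal state, subadditivity, and concavity and monotonicity of $g$ in \eqref{eq:g}). Since this holds for every admissible ensemble and every $n$, dividing by $n$ and letting $n\to\infty$ gives \eqref{eq:upper}.

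The conceptual content is the single subadditivity bound $S(\tilde E^n F^n)\le S(\tilde E^n)+S(F^n)$, applied after recognizing $F$ as a low-photon-number leakage mode; this is short. The remaining work is routine but needs care: (i) checking the degradation relation and reading off from the mode transformations that $F$ sees transmissivity $2\eta-1$; (ii) the nonnegativity of the per-input coherent information of a degradable channel, which follows from monotonicity of conditional entropy inside the purified output state; and (iii) the infinite-dimensional bookkeeping --- finiteness of all the entropies under the energy constraint and the validity of the private-information formula with an average-photon-number constraint. I expect (iii) to be the most delicate part to make fully rigorous, though none of the steps is deep.
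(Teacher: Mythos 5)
Your proof is correct, and its core is the same as the paper's: for $\eta>1/2$ you degrade Bob to Eve through a beam splitter of transmissivity $(1-\eta)/\eta$, identify the residual mode (your $F$, the paper's $\hat{c}$) as a pure-loss image of Alice's input with transmissivity $2\eta-1$, and finish with subadditivity $S(\tilde{E}F)\le S(\tilde{E})+S(F)$ plus the thermal-state maximum-entropy bound $S(F)\le g\left((2\eta-1)\bar{n}\right)$. Where you genuinely differ is in the reduction from the operational capacity to an entropic quantity. The paper invokes the single-letterization result for degradable channels (Smith, \cite{smith07}), writing $C_{\textnormal{P}}=\max\left[S\left(\hat{\rho}^{\mathbb{B}}\right)-S\left(\hat{\rho}^{\mathbb{E}}\right)\right]$ outright, after which the whole argument is a one-mode entropy computation. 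You instead start from the regularized private-information formula and discard the per-codeword terms $S\left(\mathcal{N}_\eta^{\otimes n}(\hat{\rho}_x)\right)-S\left(\mathcal{N}_{1-\eta}^{\otimes n}(\hat{\rho}_x)\right)$ using the nonnegativity of the coherent information of a degradable channel (which follows from $S(R|B)\le S(R|E)$ under degrading, and is a standard fact, so this step is sound). The trade is: you avoid citing the single-letter formula but must instead justify the regularized private-information formula for an energy-constrained, infinite-dimensional channel and the absence of an auxiliary variable in it---for degradable channels the auxiliary variable is indeed removable, but you are right to flag this bookkeeping as the delicate part, and the paper sidesteps it entirely by outsourcing to the cited theorem. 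Your treatment of $\eta\le 1/2$ via anti-degradability is also fine and accomplishes what the paper's monotonicity lemma (Lemma~\ref{lem:monotonicity}) is used for.
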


Before proving Theorem~\ref{thm:single}, we first prove a simple lemma
which says that $C_\textnormal{P}(\eta,\bar{n})$ is monotonic in $\eta$.

\begin{lemma}\label{lem:monotonicity}
  For any $1\ge \eta_1\ge\eta_2\ge0$ and any $\bar{n} > 0$,
  \begin{equation}\label{eq:monotonicity}
    C_\textnormal{P}(\eta_1,\bar{n}) \ge
    C_\textnormal{P}(\eta_2,\bar{n}).
  \end{equation}
\end{lemma}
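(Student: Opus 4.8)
The plan is to exhibit the $\eta_2$-channel as simultaneously a degraded version of the $\eta_1$-channel for Bob and an \emph{enhanced} version for Eve, so that any code that is reliable and secure for the $\eta_2$-channel remains so for the $\eta_1$-channel. The one structural ingredient is the semigroup property of the pure-loss (attenuator) channel: if $\mathcal{A}_\tau$ denotes the map that takes an input mode $\hat a$ to $\sqrt{\tau}\,\hat a+\sqrt{1-\tau}\,\hat v$ with $\hat v$ in the vacuum state, then $\mathcal{A}_{\tau_1}\circ\mathcal{A}_{\tau_2}=\mathcal{A}_{\tau_1\tau_2}$ (any residual beam-splitter phase being an input-independent unitary that can be absorbed or ignored). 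From \eqref{eq:singlemode}, the Alice-to-Bob channel with parameter $\eta$ is exactly $\mathcal{A}_\eta$ and the Alice-to-Eve channel is exactly $\mathcal{A}_{1-\eta}$.

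If $\eta_1=\eta_2$ there is nothing to prove, so assume $\eta_1>\eta_2$, which forces $\eta_1>0$ and $\eta_2<1$, so that the ratios below are well defined and lie in $[0,1]$. Since $\eta_2/\eta_1\le 1$, the semigroup property gives $\mathcal{A}_{\eta_2}=\mathcal{A}_{\eta_2/\eta_1}\circ\mathcal{A}_{\eta_1}$: Bob in the $\eta_1$-channel can reproduce his $\eta_2$-channel output by passing his received mode through an extra attenuator of transmissivity $\eta_2/\eta_1$. Likewise, since $1-\eta_1\le 1-\eta_2$, we have $\mathcal{A}_{1-\eta_1}=\mathcal{A}_{(1-\eta_1)/(1-\eta_2)}\circ\mathcal{A}_{1-\eta_2}$: Eve's output in the $\eta_1$-channel is a further attenuation, hence a degradation, of her output in the $\eta_2$-channel.

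Now run the standard reduction. Fix any blocklength-$n$ code for the $\eta_2$-channel that obeys the photon constraint \eqref{eq:constraint}, has decoding error $\epsilon_n$, and leaks at most $\delta_n$ (in Holevo information of the message with Eve's system), achieving rate $R$. Applied verbatim over the $\eta_1$-channel (the same codewords, so \eqref{eq:constraint} still holds), let Bob first pass each received mode through $\mathcal{A}_{\eta_2/\eta_1}$; his conditional output state is then exactly the $\eta_2$-channel state, so his error is still $\epsilon_n$. Eve's conditional state is $\mathcal{A}_{(1-\eta_1)/(1-\eta_2)}$ applied mode-wise to her $\eta_2$-channel state, so by the data-processing inequality for Holevo information her leakage is at most $\delta_n$. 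Hence rate $R$ is achievable over the $\eta_1$-channel as well, and letting $R\uparrow C_\textnormal{P}(\eta_2,\bar n)$ yields \eqref{eq:monotonicity}. The only real content is the degradedness observation in the second paragraph; everything else is bookkeeping, the one point to watch being that ``private capacity'' here is the \emph{classical} private capacity, so the relevant security quantity is a Holevo information to which the quantum data-processing inequality applies directly.
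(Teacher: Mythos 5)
Your proof is correct and takes essentially the same route as the paper: both arguments rest on the twin observations that Bob's $\eta_2$-output is a degradation (via an extra beam splitter of transmissivity $\eta_2/\eta_1$) of his $\eta_1$-output, while Eve's $\eta_1$-output is a degradation of her $\eta_2$-output. You simply spell out the operational reduction (error preservation, data processing for the Holevo leakage) in more detail than the paper does.
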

\begin{proof}
  Let $\mathbb{B}_i$ and $\mathbb{E}_i$ denote the output Hilbert
  spaces of Bob and Eve, respectively, of the channel with
  transmissivity (from Alice to Bob) $\eta_i$, $i=1,2$. Observe that
  $\mathbb{B}_2$ is stochastically degraded from
  $\mathbb{B}_1$. Indeed, when we pass the state on
  $\mathbb{B}_1$ through a beam splitter of transmissivity
  $\eta_2/\eta_1$, we obtain a state that is identical to the one on
  $\mathbb{B}_2$. Therefore, a Bob having access to $\mathbb{B}_1$ can
  always pass his state through this beam splitter and then make the
  same measurement as a Bob having access to $\mathbb{B}_2$, thus he can
  do \emph{at least} as well as the latter. Similarly, $\mathbb{E}_1$ is
  stochastically degraded from $\mathbb{E}_2$, and an Eve having
  access to $\mathbb{E}_1$ can do \emph{at most} as well as an Eve
  having access to $\mathbb{E}_2$. Hence we obtain
  \eqref{eq:monotonicity}. 
\end{proof}

\begin{IEEEproof}[Proof of Theorem~\ref{thm:single}]
  By Lemma~\ref{lem:monotonicity}, we only need to prove the 
  case where $\eta > 1/2$. In this case the wiretap channel is
  stochastically degraded. To see this, we pass Bob's state through
  another beam splitter to obtain output modes with annihilation operators $\hat{e}'$ and $\hat{c}$ given by
  \begin{subequations}\label{eq:degraded}
  \begin{IEEEeqnarray}{rCl}
    \hat{e}'&=& \sqrt{\eta'}\hat{b}+\sqrt{1-\eta'}\hat{v}',\\
    \hat{c} &=& \sqrt{1-\eta'}\hat{b}-\sqrt{\eta'}\hat{v}',
  \end{IEEEeqnarray}
  \end{subequations}
  where
  $\eta'\triangleq (1-\eta)/\eta\in[0,1)
  $ as we assume $\eta>1/2$, 
  and $\hat{v}'$ is in its vacuum state.
  Then the states $\hat{\rho}^{\mathbb{E}}$ and
  $\hat{\rho}^{\mathbb{E}'}$ are identical 
  for any input state $\hat{\rho}^{\mathbb{A}}$. See
  Fig.~\ref{fig:degraded}. 
  \begin{figure}[h]
  \centering
  \psfrag{a}{$\hat{a}$}
  \psfrag{b}{$\hat{b}$}
  \psfrag{c}{$\hat{c}$}
  \psfrag{e}{$\hat{e}$}
  \psfrag{e'}{$\hat{e}'$}
  \psfrag{v}{$\hat{v}\colon |0\ket$}
  \psfrag{v'}{$\hat{v}'\colon |0\ket$}
  \psfrag{eta}{$\eta$}
  \psfrag{eta'}{$\eta'=(1-\eta)/\eta$}
  \includegraphics[width=0.3\textwidth]{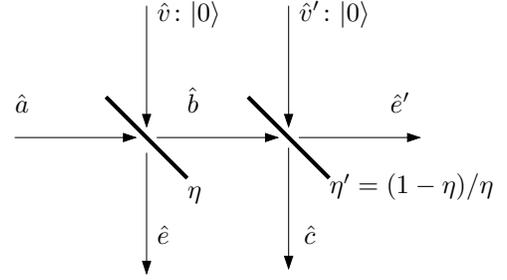}
  \caption{Illustration of the degraded wiretap channel.}
  \label{fig:degraded}
  \end{figure}

  We now prove \eqref{eq:upper} as follows:
  \begin{IEEEeqnarray}{rCl}
    C_\textnormal{P}(\eta,\bar{n}) & = & \max_{\bra
      \hat{a}^\dag\hat{a}\ket\le \bar{n}}
    \left[ S\left(\hat{\rho}^{\mathbb{B}}\right)-
    S\left(\hat{\rho}^\mathbb{E}\right)\right] \label{eq:11}\\ 
    & = & \max_{\bra
      \hat{a}^\dag\hat{a}\ket\le \bar{n}}
    \left[ S\left(\hat{\rho}^\mathbb{B}\right)-
    S\left(\hat{\rho}^{\mathbb{E}'}\right) \right]\label{eq:12}\\ 
    & = & \max_{\bra
      \hat{a}^\dag\hat{a}\ket\le \bar{n}}
    \left[ S\left(\hat{\rho}^\mathbb{B}\otimes |0\ket\bra
      0|^{\mathbb{V}'} 
    \right)-S\left(\hat{\rho}^{\mathbb{E}'}\right)
    \right] \label{eq:13} 
    \IEEEeqnarraynumspace \\
    & = & \max_{\bra
      \hat{a}^\dag\hat{a}\ket\le \bar{n}}
    \left[S\left(\hat{\rho}^{\mathbb{E}'\mathbb{C}}\right)-
    S\left(\hat{\rho}^{\mathbb{E}'}\right) \right] \label{eq:14}\\ 
    & \le & \max_{\bra
      \hat{a}^\dag\hat{a}\ket\le \bar{n}}
    \left[
    S\left(\hat{\rho}^{\mathbb{E}'}\right)+S\left(\hat{\rho}^\mathbb{C} 
    \right)- 
    S\left(\hat{\rho}^{\mathbb{E}'}\right) \right] \label{eq:15}\\
    & = & \max_{\bra
      \hat{a}^\dag\hat{a}\ket\le \bar{n}}
    S\left(\hat{\rho}^\mathbb{C}\right) \label{eq:16}\\
    & = & g\left((2\eta-1)\bar{n}\right), \label{eq:17}
  \end{IEEEeqnarray}
  where $S(\hat{\rho})= -{\rm Tr}[\hat{\rho}\log(\rho)]$ is the von Neumann entropy.
  The steps are justified as follows: \eqref{eq:11} follows from
  \cite[Theorem 2]{smith07}; \eqref{eq:12} from the fact that
  $\hat{\rho}^\mathbb{E}$ and $\hat{\rho}^{\mathbb{E}'}$ are
  identical; \eqref{eq:13} 
  because $|0\ket\bra 0|^{\mathbb{V}'}$ is a pure state; \eqref{eq:14}
  because the beam splitter
  \eqref{eq:degraded} is a unitary transformation from
  $\hat{\rho}^{\mathbb{BV}'}$ to $\hat{\rho}^{\mathbb{E}'\mathbb{C}}$;
  \eqref{eq:15} from the 
  subadditivity of von Neumann entropy; and \eqref{eq:17} because,
  according to the channel laws \eqref{eq:singlemode} and
  \eqref{eq:degraded},
  \begin{equation}
    \hat{c}=\sqrt{2\eta-1}
    \hat{a}+\sqrt{\frac{(2\eta-1)(1-\eta)}{\eta}} \hat{v}-
    \sqrt{\frac{1-\eta}{\eta}} \hat{v}', 
  \end{equation}
  so
  \begin{equation}
    \bra \hat{c}^\dag \hat{c} \ket = (2\eta-1) \bra \hat{a}^\dag
    \hat{a} \ket \le (2\eta -1)\bar{n}.
  \end{equation}
\end{IEEEproof}

\subsection{Analysis of the Bounds}

Combining the upper and lower bounds \eqref{eq:upper} and
\eqref{eq:lower} and letting $\bar{n}$ tend to zero, we obtain the
asymptotic expression for $C_{\textnormal{P}}(\eta,\bar{n})$ when
$\bar{n}$ is small.

\begin{theorem}\label{thm:asymptotic}
  The private capacity $C_{\textnormal{P}}(\eta,\bar{n})$ satisfies
  \begin{equation}
    C_{\textnormal{P}}(\eta,\bar{n}) =
    (2\eta-1)\bar{n}\log\frac{1}{\bar{n}}+O(\bar{n}),
  \end{equation}
  where $O(\bar{n})$ is a function of $\eta$ and $\bar{n}$ satisfying
  \begin{IEEEeqnarray}{rCl}
    \lefteqn{\eta\left(1+\log\frac{1}{\eta}\right)-
      (1-\eta)\left(1+\log\frac{1}{1-\eta} 
      \right) \le  \varliminf_{\bar{n}\downarrow 0}
    \frac{O(\bar{n})}{\bar{n}}}~~~~~~~~~~\nonumber\\ & \le &
    \varlimsup_{\bar{n}\downarrow 0} \frac{O(\bar{n})}{\bar{n}} \le
    (2\eta-1)\left(1+\log\frac{1}{2\eta-1}\right). \label{eq:difference}
    \IEEEeqnarraynumspace 
  \end{IEEEeqnarray}
\end{theorem}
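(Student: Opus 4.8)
The plan is to push the known sandwich $L(\eta,\bar n)\le C_\textnormal{P}(\eta,\bar n)\le U(\eta,\bar n)$, furnished by \eqref{eq:lower} and \eqref{eq:upper}, through the small-argument expansion of $g$ and then simply read off the coefficients of $\bar n$. For $\eta\le 1/2$ all three quantities are $0$, so only $\eta>1/2$ needs attention.

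First I would record the behavior of $g$ near the origin. Writing $g(x)=(1+x)\log(1+x)+x\log\frac1x$ and using $(1+x)\log(1+x)-x=\frac12 x^2/(1+\xi)$ for some $\xi\in(0,x)$, one gets
\[
  g(x)=x\log\frac1x+x+r(x),\qquad 0<r(x)\le \frac12 x^2,
\]
so the remainder is $O(x^2)$ uniformly for all $x>0$. Substituting $x=(2\eta-1)\bar n$ and splitting $\log\frac1x=\log\frac1{\bar n}+\log\frac1{2\eta-1}$ gives
\[
  U(\eta,\bar n)=(2\eta-1)\bar n\log\frac1{\bar n}+(2\eta-1)\Bigl(1+\log\frac1{2\eta-1}\Bigr)\bar n+O(\bar n^2).
\]
Applying the same expansion to $g(\eta\bar n)$ and $g((1-\eta)\bar n)$ separately in $L(\eta,\bar n)=g(\eta\bar n)-g((1-\eta)\bar n)$, the two $\bar n\log\frac1{\bar n}$ contributions combine with coefficient $\eta-(1-\eta)=2\eta-1$, leaving
\[
  L(\eta,\bar n)=(2\eta-1)\bar n\log\frac1{\bar n}+\Bigl[\eta\Bigl(1+\log\frac1\eta\Bigr)-(1-\eta)\Bigl(1+\log\frac1{1-\eta}\Bigr)\Bigr]\bar n+O(\bar n^2).
\]

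Finally I would subtract $(2\eta-1)\bar n\log\frac1{\bar n}$ from each term of $L\le C_\textnormal{P}\le U$, divide by $\bar n$, and let $\bar n\downarrow 0$; since each $O(\bar n^2)/\bar n\to 0$, the leftover inequalities are exactly \eqref{eq:difference}. Note that the leading terms of $U$ and $L$ coincide, which is what makes $(2\eta-1)\bar n\log\frac1{\bar n}$ the well-defined leading term of $C_\textnormal{P}$; the two bounds disagree only at order $\bar n$, which is why the sharpest conclusion is a $\varliminf$/$\varlimsup$ statement rather than a limit. I do not expect a genuine obstacle: the only point requiring a little care is that the $O(x^2)$ control of $r(x)$ be uniform enough to survive the three substitutions $x\mapsto\eta\bar n,(1-\eta)\bar n,(2\eta-1)\bar n$, which it is since all three arguments are at most $\bar n$.
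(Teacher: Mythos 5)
Your proposal is correct and is exactly the argument the paper intends (the paper simply asserts that the theorem follows by combining \eqref{eq:upper} and \eqref{eq:lower} as $\bar{n}\downarrow 0$, omitting the details you supply): expand $g(x)=x\log\frac{1}{x}+x+O(x^2)$, substitute into $U$ and $L$, and read off the order-$\bar{n}$ coefficients. The expansion, the remainder bound $0<r(x)\le x^2/2$, and the resulting constants all check out.
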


Theorem~\ref{thm:asymptotic} shows that the \emph{photon efficiency},
$C_{\textnormal{P}}(\eta,\bar{n})/\bar{n}$, behaves like
$\log(1/\bar{n})$ plus some constant for small $\bar{n}$. We
numerically compare the upper and lower bounds \eqref{eq:upper} and 
\eqref{eq:lower} on the photon efficiency against $\bar{n}$ for
$\eta=0.7$ in Fig.~\ref{fig:nbar}, and against $\eta$ for
$\bar{n}=10^{-3}$ in Fig.~\ref{fig:eta}.
\begin{figure}[h!]
\centering
\psfrag{nbar}{$\bar{n}$}
\includegraphics[width=0.39\textwidth]{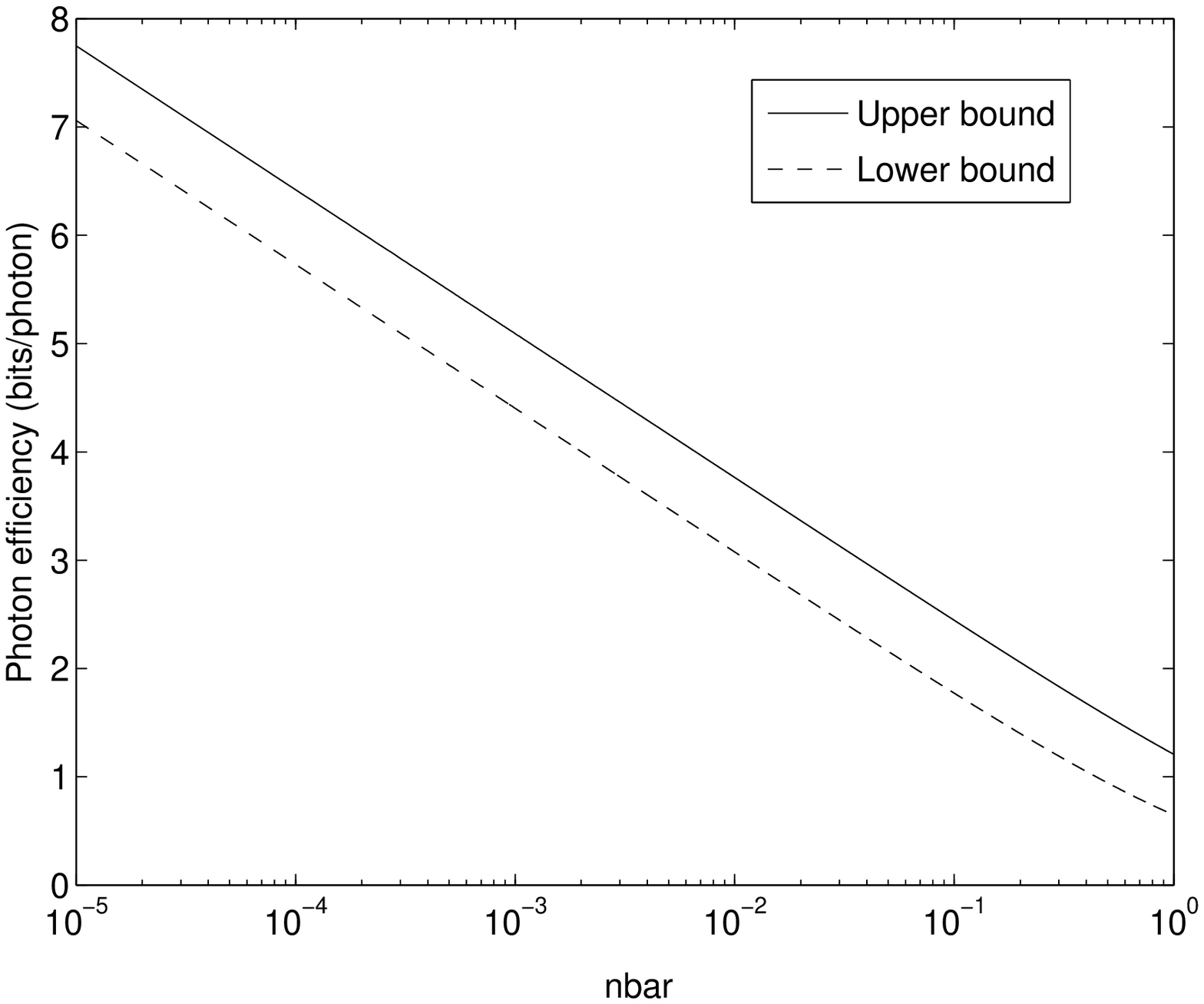}
\caption{Comparison of the upper and lower bounds on the photon
    efficiency (in bits per photon) computed from \eqref{eq:upper}
  and \eqref{eq:lower} for $\eta=0.7$.}
\label{fig:nbar}
\end{figure}

\begin{figure}[h!]
\centering
\psfrag{eta}{$\eta$}
\includegraphics[width=0.39\textwidth]{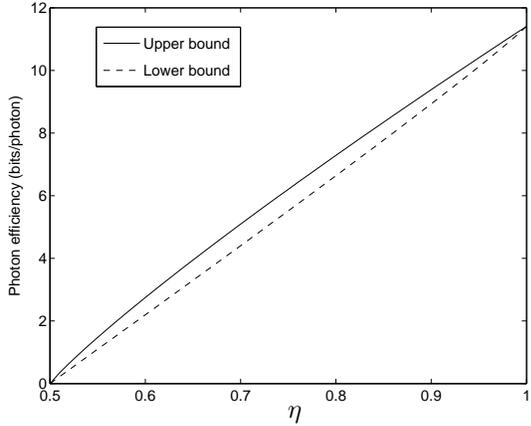}
\caption{Comparison of the upper and lower bounds on the photon
    efficiency (in bits per photon) computed from \eqref{eq:upper}
  and \eqref{eq:lower}, for $\bar{n}=10^{-3}$.}
\label{fig:eta}
\end{figure}

\section{The Multiple-Mode Channel}

\subsection{Channel Model}

Consider a multiple-mode noiseless bosonic wiretap channel in which Alice's,
Bob's, and Eve's modes are described by annihilation operators
$\{\hat{a}_1,\ldots,\hat{a}_m\}$, $\{\hat{b}_1,\ldots,\hat{b}_k\}$,
and $\{\hat{e}_1,\ldots,\hat{e}_l\}$, respectively. The channel law is a multiple-mode beam splitter relation, i.e., 
\begin{equation}\label{eq:MIMOME}
  \left( \begin{array}{c} \hat{b}_1 \\ \vdots \\ \hat{b}_k \\
      \hat{e}_1 \\ \vdots \\ \hat{e}_l \end{array} \right) =
  \left(\begin{array}{cc} \mathsf{t}_{ab} & \mathsf{t}_{vb} \\
      \mathsf{t}_{ae} & 
      \mathsf{t}_{ve} \end{array} \right) \left(\begin{array}{cc}
      \hat{a}_1 \\ 
      \vdots \\ \hat{a}_m \\ \hat{v}_1 \\ \vdots \\
      \hat{v}_{k+l-m} \end{array} \right),
\end{equation}
where
\begin{equation} \label{eq:deft}
  \mathsf{t} \triangleq  \left(\begin{array}{cc} \mathsf{t}_{ab} &
      \mathsf{t}_{vb} \\ \mathsf{t}_{ae} & 
      \mathsf{t}_{ve} \end{array} \right)
\end{equation}
is a unitary matrix, and where $\{\hat{v}_1,\ldots,\hat{v}_{k+l-m}\}$
are annihilation operators of vacuum-state noise modes.  Note that this
is again a worst-case model in which Eve obtains all photons that do not
reach Bob.  

\subsection{Simplification of Channel
  Model} \label{sub:simplification} 

The next theorem shows that any multiple-mode noiseless bosonic
wiretap channel is equivalent to a group of parallel
(i.e., noninterfering) single-mode channels.

\begin{theorem}\label{thm:equivalent}
  The channel \eqref{eq:MIMOME} is equivalent to a group of parallel
  single-mode channels:
  \begin{subequations}\label{eq:parallel}
  \begin{IEEEeqnarray}{rCl}
    \hat{b}_i' & = & \sqrt{\eta_i} \hat{a}_i' + \sqrt{1-\eta_i}
    \hat{v}_i',\\
    \hat{e}_i' & = & \sqrt{1-\eta_i} \hat{a}_i' - \sqrt{\eta_i}
    \hat{v}_i',
  \end{IEEEeqnarray}
  \end{subequations}
  where $i\in\{1,\ldots,m\}$, and where $\{\eta_1,\ldots,\eta_m\}$ are the 
  eigenvalues of $\mathsf{t}_{ba}^\dag \mathsf{t}_{ba}$.
\end{theorem}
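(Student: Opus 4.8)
The plan is to bring the unitary channel matrix $\mathsf{t}$ of \eqref{eq:deft} into block-diagonal form, each block being a single-mode beam splitter, by applying reversible \emph{passive} (photon-number-conserving) linear-optical transformations at Alice's input, at Bob's output, at Eve's output, and to the vacuum input modes. Each such operation leaves $C_{\textnormal{P}}$ unchanged: a passive unitary applied by Alice is an invertible re-encoding that, conserving total photon number, respects the input photon-number constraint; passive unitaries applied by Bob and by Eve are invertible post-processings and so change neither what Bob can decode nor what Eve learns; and a passive unitary acting only on the vacuum input modes maps the vacuum state to itself, hence does not alter the channel at all. If Alice applies $\mathsf{V}_a$, Bob $\mathsf{W}_b$, Eve $\mathsf{W}_e$, and the vacuum modes are mixed by $\mathsf{W}_v$, the effective channel matrix becomes $\mathsf{t}'=(\mathsf{W}_b\oplus\mathsf{W}_e)\,\mathsf{t}\,(\mathsf{V}_a\oplus\mathsf{W}_v)$, so it suffices to choose $\mathsf{V}_a,\mathsf{W}_b,\mathsf{W}_e,\mathsf{W}_v$ such that $\mathsf{t}'$ is a direct sum of $2\times2$ beam-splitter blocks (together with trivial $1\times1$ blocks). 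This is, in effect, the cosine--sine decomposition of $\mathsf{t}$ with respect to the row partition into Bob's and Eve's modes and the column partition into Alice's and the vacuum modes.

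I would carry this out in three steps. \emph{(i) Diagonalize the Alice-to-Bob block.} Take a singular-value decomposition $\mathsf{t}_{ab}=\mathsf{U}\,\mathsf{D}\,\mathsf{V}^\dagger$ with $\mathsf{U}\in\mathbb{C}^{k\times k}$, $\mathsf{V}\in\mathbb{C}^{m\times m}$ unitary and $\mathsf{D}$ a $k\times m$ matrix carrying nonnegative entries $\sqrt{\eta_1},\sqrt{\eta_2},\dots$ on its main diagonal, where $\eta_1,\dots,\eta_m$ are the eigenvalues of $\mathsf{t}_{ab}^\dagger\mathsf{t}_{ab}$ (the Alice-to-Bob block of \eqref{eq:deft}); these lie in $[0,1]$ because $\mathsf{t}_{ab}^\dagger\mathsf{t}_{ab}=\mathsf{I}_m-\mathsf{t}_{ae}^\dagger\mathsf{t}_{ae}$ by unitarity of $\mathsf{t}$. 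Setting $\mathsf{V}_a=\mathsf{V}$ and $\mathsf{W}_b=\mathsf{U}^\dagger$ makes the Alice-to-Bob block of $\mathsf{t}'$ equal to $\mathsf{D}$, so $\hat b_i'$ equals $\sqrt{\eta_i}\,\hat a_i'$ plus a linear combination of the vacuum operators. \emph{(ii) Diagonalize the Alice-to-Eve block.} The columns of the unitary $\mathsf{t}'$ indexed by Alice are orthonormal; for the $i$-th of them, the component in Bob's rows is $\sqrt{\eta_i}$ times a standard basis vector, so the component in Eve's rows has squared norm $1-\eta_i$ and, for $i\ne j$, is orthogonal to that of column $j$. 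Hence a unitary $\mathsf{W}_e$ on Eve's output modes makes the Alice-to-Eve block of $\mathsf{t}'$ diagonal as well, so $\hat e_i'$ equals $\sqrt{1-\eta_i}\,\hat a_i'$ plus a linear combination of the vacuum operators. \emph{(iii) Align the environments.} Write $\hat{\boldsymbol v}=(\hat v_1,\dots,\hat v_{k+l-m})^{\top}$, $\hat b_i'=\sqrt{\eta_i}\,\hat a_i'+\boldsymbol{\alpha}_i^{\top}\hat{\boldsymbol v}$, and $\hat e_i'=\sqrt{1-\eta_i}\,\hat a_i'+\boldsymbol{\beta}_i^{\top}\hat{\boldsymbol v}$. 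Orthonormality of the rows of $\mathsf{t}'$ forces $\|\boldsymbol{\alpha}_i\|^2=1-\eta_i$, $\|\boldsymbol{\beta}_i\|^2=\eta_i$, $\boldsymbol{\alpha}_i\perp\boldsymbol{\alpha}_j$ for $i\ne j$, and $|\langle\boldsymbol{\alpha}_i,\boldsymbol{\beta}_i\rangle|=\sqrt{\eta_i(1-\eta_i)}=\|\boldsymbol{\alpha}_i\|\,\|\boldsymbol{\beta}_i\|$. The last identity is the equality case of Cauchy--Schwarz, so $\boldsymbol{\beta}_i$ is a scalar multiple of $\boldsymbol{\alpha}_i$; matching norm and phase gives $\boldsymbol{\beta}_i=-\sqrt{\eta_i/(1-\eta_i)}\,\boldsymbol{\alpha}_i$ whenever $0<\eta_i<1$. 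Since the vectors $\boldsymbol{\alpha}_i/\sqrt{1-\eta_i}$ (over those $i$ with $\eta_i<1$) are orthonormal, we may pick the vacuum-mode mixing $\mathsf{W}_v$ so that $\boldsymbol{\alpha}_i^{\top}\hat{\boldsymbol v}=\sqrt{1-\eta_i}\,\hat v_i'$; then $\hat b_i'=\sqrt{\eta_i}\,\hat a_i'+\sqrt{1-\eta_i}\,\hat v_i'$ and $\hat e_i'=\sqrt{1-\eta_i}\,\hat a_i'-\sqrt{\eta_i}\,\hat v_i'$, which is precisely \eqref{eq:parallel}.

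Finally I would clear up the degenerate cases. When $\eta_i\in\{0,1\}$, one of $\boldsymbol{\alpha}_i,\boldsymbol{\beta}_i$ vanishes and the other, normalized and with the free sign chosen, serves as $\hat v_i'$, again yielding \eqref{eq:parallel}. When $m$ differs from $k$ or $l$, the representation is reconciled by discarding output modes that are decoupled from the input (they carry only the vacuum) or by adjoining vacuum ancillas at Bob's or Eve's side; these are free operations that do not affect $C_{\textnormal{P}}$, after which the channel is exactly the $m$ parallel single-mode channels \eqref{eq:parallel}. I expect the crux to be step (iii): the Cauchy--Schwarz equality is what shows that the environment mode feeding Bob's $i$-th output is the \emph{same} mode that feeds Eve's $i$-th output, thereby promoting ``Bob alone sees parallel channels and Eve alone sees parallel channels'' to ``the \emph{joint} wiretap channel factorizes into single-mode wiretap channels'' --- which is what the private capacity depends on. A secondary point to handle with care is checking that each of the four passive-unitary reductions is legitimate, i.e.\ preserves both the photon-number constraint and $C_{\textnormal{P}}$.
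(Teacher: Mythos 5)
Your proposal is correct and follows essentially the same route as the paper: a singular-value decomposition of $\mathsf{t}_{ab}$, the unitarity relation $\mathsf{t}_{ab}^\dag\mathsf{t}_{ab}+\mathsf{t}_{ae}^\dag\mathsf{t}_{ae}=\mathsf{1}$ to simultaneously diagonalize the Alice-to-Bob and Alice-to-Eve blocks, and cancellation of the residual unitaries by free local operations at Alice, Bob, and Eve. Your step (iii), aligning the environment modes via the Cauchy--Schwarz equality case, is a detail the paper leaves implicit (it follows there because the joint output state depends on $\mathsf{t}$ only through $\mathsf{t}_{ab}$ and $\mathsf{t}_{ae}$, the vacuum inputs being invariant under passive unitaries), so making it explicit is a welcome but not divergent addition.
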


\begin{proof}
  From the unitarity of the transition matrix $\mathsf{t}$ we have
  \begin{equation}
    \left(\begin{array}{cc} \mathsf{t}_{ab}^\dag &
      \mathsf{t}_{ae}^\dag \\ \mathsf{t}_{vb}^\dag & 
      \mathsf{t}_{ve}^\dag \end{array} \right) \cdot 
    \left(\begin{array}{cc} \mathsf{t}_{ab} &
      \mathsf{t}_{vb} \\ \mathsf{t}_{ae} & 
      \mathsf{t}_{ve} \end{array} \right) =
  \mathsf{1}^{(k+l)\times(k+l)},
  \end{equation}
  so 
  \begin{equation}
    \mathsf{t}_{ab}^\dag\mathsf{t}_{ab} +
    \mathsf{t}_{ae}^\dag\mathsf{t}_{ae} = \mathsf{1}^{m\times m}.
  \end{equation}
  This implies that $\mathsf{t}_{ab}^\dag\mathsf{t}_{ab}$ and
  $\mathsf{t}_{ae}^\dag\mathsf{t}_{ae}$ are simultaneously
  diagonalizable. More specifically, there exists a unitary matrix
  $\mathsf{v}$ such that
  \begin{IEEEeqnarray}{rCl}
    \mathsf{v}^\dag \mathsf{t}_{ab}^\dag \mathsf{t}_{ab} \mathsf{v} &
    = & \mathsf{d},\\
    \mathsf{v}^\dag \mathsf{t}_{ae}^\dag \mathsf{t}_{ae} \mathsf{v} &
    = & \mathsf{1}^{m\times m} - \mathsf{d},
  \end{IEEEeqnarray}
  where $\mathsf{d}$ is an $m\times m$ diagonal matrix whose
  diagonal terms are, by assumption, $\eta_1,\ldots,\eta_m$. Therefore
  the matrices $\mathsf{t}_{ab}$ and $\mathsf{t}_{ae}$ have the same
  right singular vectors, and their singular-value decompositions can
  be written as
  \begin{IEEEeqnarray}{rCl}
    \mathsf{t}_{ab} & = & \mathsf{u}_{ab} \mathsf{s}_{ab}
    \mathsf{v}^\dag,\\
    \mathsf{t}_{ae} & = & \mathsf{u}_{ae} \mathsf{s}_{ae}
    \mathsf{v}^\dag,
  \end{IEEEeqnarray}
  where $\mathsf{u}_{ab}$ and $\mathsf{u}_{ae}$ are unitary matrices,
  $\mathsf{s}_{ab}$ is a $k\times m$ diagonal matrix whose
  (nonzero) diagonal entries are (the nonzero elements of)
  $\{\sqrt{\eta_1},\ldots,\sqrt{\eta_m}\}$, and  $\mathsf{s}_{ae}$ is
  an $l\times m$ diagonal matrix whose 
  (nonzero) diagonal entries are (the nonzero elements of)
  $\{\sqrt{1-\eta_1},\ldots,\sqrt{1-\eta_m}\}$.
  
  Now we observe that $\mathsf{v}^\dag$ does not affect the private
  capacity of this channel. This is because Alice can perform
  $\mathsf{v}$ on the 
  input light modes that she prepared to cancel $\mathsf{v}^\dag$
  simultaneously for Bob and Eve. Hence we can always set
  $\mathsf{v}^\dag$ to be $\mathsf{1}^{m\times m}$ without affecting
  the private capacity. Similarly, $\mathsf{u}_{ab}$ and
  $\mathsf{u}_{ae}$ can be canceled by Bob and Eve, respectively, so
  they can also be set to identity matrices without changing the
  private capacity. We thus conclude that the private capacity of
  \eqref{eq:MIMOME} is the same as that of the parallel-mode channel
  \eqref{eq:parallel}. 
\end{proof}

\subsection{Capacity Results}
Denote the private capacity of the channel \eqref{eq:MIMOME} under
the average-photon-number constraint
\begin{equation}\label{eq:MIMOMEconstraint}
  \sum_{i=1}^m \bra \hat{a}_i^\dag \hat{a}_i \ket \le \bar{n}
\end{equation}
by $C_\textnormal{P}^\textnormal{M}(\mathsf{t},\bar{n})$. By
Theorem~\ref{thm:equivalent}, it equals the capacity of the channel
\eqref{eq:parallel} under constraint
\begin{equation}
  \sum_{i=1}^m \bra \hat{a}_i'^\dag \hat{a}_i' \ket \le \bar{n},
\end{equation}
which we denote by
$C_\textnormal{P}^\textnormal{M}(\boldsymbol{\eta},\bar{n})$ where 
$\boldsymbol{\eta}\triangleq
(\eta_1,\ldots,\eta_m)^\textnormal{T}$. We first show that 
$C_\textnormal{P}^\textnormal{M} (\boldsymbol{\eta},\bar{n})$ is
achievable by coding 
independently for each mode in \eqref{eq:parallel}.

\begin{theorem}
  Coding independently for each mode in \eqref{eq:parallel} is optimal:
  \begin{equation}\label{eq:independent}
    C_\textnormal{P}^\textnormal{M} (\boldsymbol{\eta},\bar{n})  = 
    \max_{\substack{\bar{n}_i\ge 0,\quad i=1,\ldots,m,\\ \sum_{i=1}^m
        \bar{n}_i = \bar{n}}} \sum_{i=1}^m
    C_\textnormal{P}(\eta_i,\bar{n}_i).
  \end{equation}
\end{theorem}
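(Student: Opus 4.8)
The plan is to establish the two inequalities in \eqref{eq:independent} separately; achievability is routine and the converse carries the substance. For achievability (``$\ge$''), fix any allocation $\bar{n}_1,\dots,\bar{n}_m\ge 0$ with $\sum_{i}\bar{n}_i=\bar{n}$ and, on mode $i$ of \eqref{eq:parallel}, use a private code that nearly attains $C_\textnormal{P}(\eta_i,\bar{n}_i)$ under the per-mode constraint $\bra \hat{a}_i'^\dag \hat{a}_i' \ket\le\bar{n}_i$. Because the modes do not interact, Bob decodes mode by mode and Eve's overall state is the tensor product of her per-mode states; since the sub-messages are independent and each sub-code keeps the corresponding per-mode state nearly decoupled from its sub-message, Eve's total state is nearly decoupled from the concatenated message, and the total average-photon-number constraint holds with equality. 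Hence $\sum_{i} C_\textnormal{P}(\eta_i,\bar{n}_i)$ is achievable for every admissible allocation, and maximizing over allocations gives ``$\ge$''.

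For the converse (``$\le$'') I would first reduce to the case $\eta_i\ge 1/2$ for all $i$. By the argument of Lemma~\ref{lem:monotonicity} applied one mode at a time, $C_\textnormal{P}^\textnormal{M}(\boldsymbol{\eta},\bar{n})$ is nondecreasing in each $\eta_i$, so replacing every $\eta_i<1/2$ by $1/2$ can only increase its left side, while it leaves the right side of \eqref{eq:independent} unchanged because $C_\textnormal{P}(\eta_i,\cdot)=0=C_\textnormal{P}(1/2,\cdot)$ for such modes by \eqref{eq:L} and \eqref{eq:U}. Assuming now that $\eta_i\ge 1/2$ for all $i$, each single-mode channel is degradable, with degrading map the pure-loss channel of transmissivity $(1-\eta_i)/\eta_i\le 1$ exhibited in the proof of Theorem~\ref{thm:single}, and a tensor product of degradable channels is again degradable (its degrading map being the tensor product of the individual ones). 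Hence \cite[Theorem 2]{smith07}, applied exactly as for \eqref{eq:11}, yields
\[
  C_\textnormal{P}^\textnormal{M}(\boldsymbol{\eta},\bar{n})=\max\left[S\left(\hat{\rho}^{\mathbb{B}^{m}}\right)-S\left(\hat{\rho}^{\mathbb{E}^{m}}\right)\right],
\]
the maximum being over all input states (possibly entangled across the modes) with $\sum_{i}\bra \hat{a}_i'^\dag \hat{a}_i' \ket\le\bar{n}$; all entropies here are finite, being bounded by values of $g(\cdot)$ on account of the photon-number constraint.

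The crux is then the subadditivity of this coherent-information-type quantity over the parallel degradable modes: for any input $\hat{\rho}^{\mathbb{A}^{m}}$ with marginals $\hat{\rho}^{\mathbb{A}_i}$ and per-mode photon numbers $\bar{n}_i=\bra \hat{a}_i'^\dag \hat{a}_i' \ket$,
\[
  S\left(\hat{\rho}^{\mathbb{B}^{m}}\right)-S\left(\hat{\rho}^{\mathbb{E}^{m}}\right)\le\sum_{i=1}^{m}\left[S\left(\hat{\rho}^{\mathbb{B}_i}\right)-S\left(\hat{\rho}^{\mathbb{E}_i}\right)\right]\le\sum_{i=1}^{m}C_\textnormal{P}(\eta_i,\bar{n}_i),
\]
where the second inequality bounds each term by its constrained maximum, again via \eqref{eq:11}. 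Since $\sum_{i}\bar{n}_i\le\bar{n}$ and $C_\textnormal{P}(\eta_i,\cdot)$ is nondecreasing in its photon-number argument, the right-hand side is at most $\max_{\sum_{i}\bar{n}_i=\bar{n}}\sum_{i}C_\textnormal{P}(\eta_i,\bar{n}_i)$, which is ``$\le$'' in \eqref{eq:independent}. The first inequality above is the additivity of the coherent information of degradable channels (Devetak and Shor): its proof purifies the input, rewrites each side using the fact that every Eve system is the image of the corresponding Bob system under the degrading map, and invokes strong subadditivity. I expect this to be the main obstacle, the two points needing care being (i) that the purification, Stinespring dilation, and strong-subadditivity manipulations remain valid in the present infinite-dimensional setting --- which the photon-number constraint secures by keeping all relevant entropies finite --- and (ii) that the global rather than per-mode energy budget is absorbed correctly, which is what reading off the marginal photon numbers $\bar{n}_i$ of the joint optimizer accomplishes.
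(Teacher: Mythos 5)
Your proposal is correct and follows essentially the same route as the paper: achievability by independent per-mode coding, and a converse that first lifts every $\eta_i<1/2$ to $1/2$ via the monotonicity argument of Lemma~\ref{lem:monotonicity} (noting this changes neither side) and then invokes additivity of the private capacity over the resulting parallel degradable modes. The only difference is one of detail: the paper cites the additivity for degraded channels directly from \cite{smith07}, whereas you sketch the underlying Devetak--Shor coherent-information additivity argument yourself.
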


\begin{proof}
  Let $\boldsymbol{\eta}'$ be
  \begin{equation}
    \eta_i' = \begin{cases} \eta_i,&\eta_i> 1/2,\\ 1/2,& \eta_i\le
      1/2. \end{cases}
  \end{equation}
  By extending Lemma~\ref{lem:monotonicity} to the multiple-mode
  scenario, we have
  \begin{equation}
     C_\textnormal{P}^\textnormal{M} (\boldsymbol{\eta},\bar{n}) \le
     C_\textnormal{P}^\textnormal{M} (\boldsymbol{\eta}',\bar{n}).
  \end{equation}
  Next denote by
  $C_\textnormal{P}^\textnormal{M}
  (\boldsymbol{\eta}',\bar{\boldsymbol{n}})$, where 
  $\bar{\boldsymbol{n}} \triangleq
  (\bar{n}_1,\ldots,\bar{n}_m)^\textnormal{T}$, 
  the capacity of the parallel-mode channel with transmissivities
  $\boldsymbol{\eta}'$ and \emph{individual} photon-number constraints
  \begin{equation}
    \bra \hat{a}_i'^\dag\hat{a}_i' \ket \le \bar{n}_i,\quad
    i=1,\ldots,m,
  \end{equation}
  then
  \begin{equation}
    C_\textnormal{P}^\textnormal{M} (\boldsymbol{\eta}',\bar{n}) =
    \max_{\substack{\bar{\boldsymbol{n}}\colon \bar{n}_i\ge 0,\quad
        i=1,\ldots,m\\ \sum_{i=1}^m \bar{n}_i = \bar{n}}} 
    C_\textnormal{P}^\textnormal{M} (\boldsymbol{\eta}',\bar{\boldsymbol{n}}).
  \end{equation}
  To simplify $C_\textnormal{P}^\textnormal{M}
  (\boldsymbol{\eta}',\bar{\boldsymbol{n}})$, 
  note that each individual channel of this parallel-mode channel is
  stochastically degraded, so the private capacities of the individual
  channels are \emph{additive} \cite{smith07}:
  \begin{equation}
    C_\textnormal{P}^\textnormal{M}
    (\boldsymbol{\eta}',\bar{\boldsymbol{n}}) = \sum_{i=1}^n 
    C_\textnormal{P} (\eta_i',\bar{n}_i).
  \end{equation}
  We thus have
  \begin{IEEEeqnarray}{rCl}
    C_\textnormal{P}^\textnormal{M} (\boldsymbol{\eta},\bar{n}) &\le&
    \max_{\substack{\bar{n}_i\ge 0,\quad i=1,\ldots,m,\\ \sum_{i=1}^m
        \bar{n}_i = \bar{n}}} \sum_{i\colon \eta_i>1/2}
    C_\textnormal{P}(\eta_i',\bar{n}_i) \label{eq:cp1}\\ & = &
   \max_{\substack{\bar{n}_i\ge 0,\quad i=1,\ldots,m,\\ \sum_{i=1}^m
        \bar{n}_i = \bar{n}}} \sum_{i\colon \eta_i>1/2}
    C_\textnormal{P}(\eta_i,\bar{n}_i),\label{eq:independentU}
    \IEEEeqnarraynumspace 
  \end{IEEEeqnarray}
  where the equality follows because the optimal photon-number
  allocation is the same for the right-hand sides of both \eqref{eq:cp1} and 
  \eqref{eq:independentU}, which assigns zero photon to the
  modes where $\eta_i\le 1/2$ (i.e., where $\eta_i'= 1/2$).
  
  On the other hand, by coding independently, and using the optimal
  code for each mode, we can achieve the lower bound
  \begin{equation} \label{eq:independentL}
     C_\textnormal{P}^\textnormal{M} (\boldsymbol{\eta},\bar{n}) \ge 
    \max_{\substack{\bar{n}_i\ge 0,\quad i=1,\ldots,m,\\ \sum_{i=1}^m
        \bar{n}_i = \bar{n}}} \sum_{i\colon \eta_i>1/2}
    C_\textnormal{P}(\eta_i,\bar{n}_i).
  \end{equation}
  Combining \eqref{eq:independentU} and \eqref{eq:independentL} proves
  \eqref{eq:independent}. 
\end{proof}

Now it is straightforward to extend the upper and lower bounds
\eqref{eq:upper} and \eqref{eq:lower} to the multiple-mode case. In
particular, in the limit as $\bar{n}$ approaches zero, it is easy to
check that the optimal
photon-number allocation is the same for both the upper and the lower
bounds, and it sends all photons in the mode with the largest
transmissivity. We hence have the following asymptotic capacity
expression. 

\begin{theorem}
  The capacity of the channel \eqref{eq:MIMOME} under constraint
  \eqref{eq:MIMOMEconstraint} satisfies
  \begin{equation}
    C_\textnormal{P}^\textnormal{M} (\mathsf{t}, \bar{n}) =
    (2\eta_{\max}-1)\bar{n} \log \frac{1}{\bar{n}}+O(\bar{n}),
  \end{equation}
  where $\eta_{\max}$ is the largest eigenvalue of
  $\mathsf{t}_{ab}^\dag \mathsf{t}_{ab}$, and where the term
  $O(\bar{n})$ is at most linear in $\bar{n}$:
  \begin{equation}
    \varlimsup_{\bar{n}\downarrow 0} \left| \frac{O(\bar{n})}{\bar{n}}
    \right| < \infty.
  \end{equation}
\end{theorem}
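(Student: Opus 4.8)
The plan is to reduce everything to the single-mode results of Section~III and then carry out an elementary optimization over how the photon budget is split among the modes.

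First I would invoke Theorem~\ref{thm:equivalent} together with \eqref{eq:independent} to write, with $\eta_1,\dots,\eta_m$ the eigenvalues of $\mathsf{t}_{ab}^\dag\mathsf{t}_{ab}$ and $\eta_{\max}\triangleq\max_i\eta_i$,
\[
  C_\textnormal{P}^\textnormal{M}(\mathsf{t},\bar{n})=\max_{\substack{\bar{n}_i\ge 0\\ \sum_i\bar{n}_i=\bar{n}}}\sum_{i=1}^m C_\textnormal{P}(\eta_i,\bar{n}_i).
\]
If $\eta_{\max}\le 1/2$ every summand is zero, so $C_\textnormal{P}^\textnormal{M}\equiv 0$ and the claimed expansion holds trivially; hence I would assume $\eta_{\max}>1/2$ henceforth.

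For the lower bound I would feed the entire photon budget into a single mode attaining $\eta_i=\eta_{\max}$, which is an admissible allocation, giving $C_\textnormal{P}^\textnormal{M}(\mathsf{t},\bar{n})\ge C_\textnormal{P}(\eta_{\max},\bar{n})$; Theorem~\ref{thm:asymptotic} then yields $C_\textnormal{P}(\eta_{\max},\bar{n})=(2\eta_{\max}-1)\bar{n}\log(1/\bar{n})+O(\bar{n})$ with $\varlimsup_{\bar{n}\downarrow 0}|O(\bar{n})/\bar{n}|<\infty$ (pinned between the two constants in \eqref{eq:difference} evaluated at $\eta=\eta_{\max}$). For the upper bound I would apply $C_\textnormal{P}(\eta_i,\bar{n}_i)\le U(\eta_i,\bar{n}_i)$ from \eqref{eq:upper}, drop the $\eta_i\le 1/2$ terms (which vanish), and use that $g$ is increasing and $2\eta_i-1\le 2\eta_{\max}-1\triangleq c\in(0,1]$ to get $C_\textnormal{P}^\textnormal{M}(\mathsf{t},\bar{n})\le\max_{\sum_i\bar{n}_i=\bar{n}}\sum_{i=1}^m g(c\bar{n}_i)$.

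The step that needs the most care is checking that this last maximum does not inflate the $\bar{n}\log(1/\bar{n})$ coefficient beyond $2\eta_{\max}-1$. I would expand $g$ via \eqref{eq:g}: by $\log(1+x)\le x$ the piece $\sum_i(1+c\bar{n}_i)\log(1+c\bar{n}_i)$ is $O(\bar{n})$, while, writing $\bar{n}_i=\bar{n}p_i$ with $\sum_i p_i=1$,
\[
  -\sum_{i=1}^m c\bar{n}_i\log(c\bar{n}_i)=c\bar{n}\log\frac{1}{\bar{n}}+c\bar{n}\Bigl(\log\frac{1}{c}-\sum_{i=1}^m p_i\log p_i\Bigr)\le c\bar{n}\log\frac{1}{\bar{n}}+c\bar{n}\bigl(\log(1/c)+\log m\bigr),
\]
since the entropy of $(p_1,\dots,p_m)$ is at most $\log m$; finiteness of $m$ is exactly what keeps this correction at order $\bar{n}$. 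This gives $C_\textnormal{P}^\textnormal{M}(\mathsf{t},\bar{n})\le(2\eta_{\max}-1)\bar{n}\log(1/\bar{n})+O(\bar{n})$ uniformly over the allocation, and sandwiching with the lower bound proves the theorem. (One could instead pin down the optimal split directly, as remarked before the statement, but the crude entropy bound sidesteps locating the exact optimizer.)
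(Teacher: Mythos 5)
Your proof is correct, and it follows the route the paper intends---reduce to parallel single-mode channels via Theorem~\ref{thm:equivalent} and \eqref{eq:independent}, then sandwich the capacity between the single-mode bounds under an optimization over the photon allocation---but it is substantially more careful than what the paper actually writes. The paper dispatches this theorem with the remark that ``the optimal photon-number allocation is the same for both the upper and the lower bounds, and it sends all photons in the mode with the largest transmissivity.'' Taken literally that claim is false: since $g'(x)=\log(1+1/x)\to\infty$ as $x\downarrow 0$, the marginal value of the first photon in any mode with $\eta_i>1/2$ is infinite, so the exact maximizers of both $\sum_i U(\eta_i,\bar{n}_i)$ and $\sum_i L(\eta_i,\bar{n}_i)$ spread photons over all such modes; only the leading $\bar{n}\log(1/\bar{n})$ coefficient is governed by $\eta_{\max}$. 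Your uniform bound over all allocations---splitting $g$ via \eqref{eq:g}, absorbing $(1+x)\log(1+x)\le (1+x)x$ into $O(\bar{n})$, and controlling the $-x\log x$ piece by the entropy bound $-\sum_i p_i\log p_i\le\log m$---is exactly what is needed to make the upper bound rigorous without locating the optimizer, and the finiteness of $m$ enters precisely where it must; together with the single-best-mode lower bound and \eqref{eq:difference}, both the $\varliminf$ and $\varlimsup$ of $O(\bar{n})/\bar{n}$ are pinned to finite constants, as required.

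One imprecision: your claim that the expansion ``holds trivially'' when $\eta_{\max}\le 1/2$ is not right for $\eta_{\max}<1/2$, since the capacity is then $0$ while the claimed leading term is a negative multiple of $\bar{n}\log(1/\bar{n})$, so the residual is not $O(\bar{n})$. This is an artifact of the theorem statement itself (Theorem~\ref{thm:asymptotic} has the same issue) and should be read as implicitly assuming $\eta_{\max}>1/2$; it is better to flag that assumption than to assert triviality.
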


As an example of our multiple-mode private capacity bounds, consider the use of $m=10^3$ high-transmissivity spatial modes with near-equal, near-unity eigenvalues, $(\eta_1,\ldots,\eta_m)^T$, as exist for $L$\,m vacuum-propagation at wavelength $\lambda$ between coaxial diameter-$D$ circular pupils satisfying $(\pi D^2/4\lambda L)^2 \gg m$ \cite{Slepian65}.  Figure~\ref{fig:spiecomm} shows that our results provide tight bounds on the photon efficiency and spectral efficiency for this example.
\begin{figure}[h!]
\centering
\includegraphics[width=0.39\textwidth]{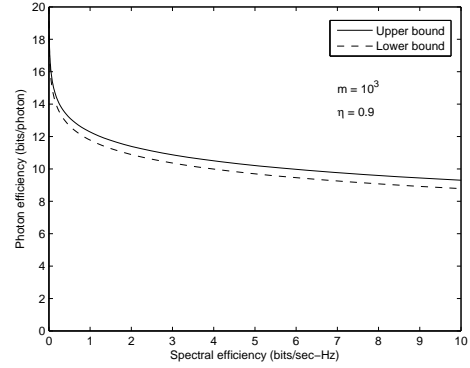}
\caption{Comparison of the upper and lower bounds for photon efficiency versus spectral efficiency for $m=10^3$ spatial modes each with $\eta = 0.9$.}
\label{fig:spiecomm}
\end{figure}

\section{Channels with Turbulence}

\subsection{Channel Model}
Consider a multiple-mode wiretap channel in which the transition matrix
$\mathsf{t}$ in \eqref{eq:deft} is replaced by a random matrix
$\mathsf{T}$. We assume a \emph{coherent} scenario 
where Alice and Bob know the realization of
$\mathsf{T}_{ab}$. Then, as discussed in
Section~\ref{sub:simplification}, they also know the other parts of
$\mathsf{T}$, namely, $\mathsf{T}_{ae}$, $\mathsf{T}_{vb}$ and
$\mathsf{T}_{ve}$ except for possible unitary transformations that
are irrelevant to capacity calculations. We impose the constraint
that the average number of transmitted photons in \emph{every} channel
use must not exceed $\bar{n}$, 
irrespective of the realization of $\mathsf{T}_{ab}$. Denote by
$\{H_1,\ldots,H_m\}$ the random 
eigenvalues of $\mathsf{T}_{ab}^\dag \mathsf{T}_{ab}$, then the
capacity of this channel can be expressed as
\begin{equation}
  C_\textnormal{P}^\textnormal{M} (\mathsf{T},\bar{n}) =
  \E{\max_{\substack{\bar{N}_i\ge 0,\quad i\in\{1,\ldots,m\} \\
        \sum_{i=1}^m \bar{N}_i = 
      \bar{n}}} C_\textnormal{P}(H_i,\bar{N}_i)}. 
\end{equation}
For near-field operation---wherein the turbulent channel will support multiple spatial modes with appreciable eigenvalues \cite{Shapiro74}---the exact distribution of $\mathsf{T}_{ab}$ is unavailable. Instead, we can only
compute the second-moment matrix
$
  \E{\mathsf{T}_{ab}^\dag \mathsf{T}_{ab}}
$.
Our goal in this section is to find good bounds on the private
capacity of 
the multiple-mode wiretap channel with turbulence expressed using
$\E{\mathsf{T}_{ab}^\dag \mathsf{T}_{ab}}$.

\subsection{Lower Bound on Private Capacity}
To derive a lower bound on the private capacity of this channel, we
need two lemmas.

\begin{lemma}\label{lem:convex}
  The single-mode lower bound $L(\eta,\bar{n})$ as defined in
  \eqref{eq:L} is convex in $\eta$ for $\eta\in[0,1]$ and for every
  $\bar{n}>0$. 
\end{lemma}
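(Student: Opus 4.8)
The plan is to reduce convexity of the piecewise-defined map $\eta\mapsto L(\eta,\bar{n})$ to three ingredients: convexity on $[0,1/2]$, which is immediate since $L$ vanishes identically there; convexity on $[1/2,1]$, which I will get from a direct second-derivative computation; and a matching condition at the breakpoint $\eta=1/2$ guaranteeing that the two convex pieces glue into a globally convex function.

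For the piece on $[1/2,1]$, I would first record the derivatives of $g$: from \eqref{eq:g} one obtains $g'(x)=\log(1+1/x)$ and hence $g''(x)=-1/\bigl(x(x+1)\bigr)$ for $x>0$. The crucial observation is that $g''$ is strictly increasing on $(0,\infty)$, because $x\mapsto x(x+1)$ is positive and increasing there. Differentiating $L(\eta,\bar{n})=g(\eta\bar{n})-g\bigl((1-\eta)\bar{n}\bigr)$ twice in $\eta$ gives
\[
  \frac{\partial^2}{\partial\eta^2}L(\eta,\bar{n})
  =\bar{n}^2\Bigl[g''(\eta\bar{n})-g''\bigl((1-\eta)\bar{n}\bigr)\Bigr].
\]
For $\eta\in(1/2,1)$ we have $\eta\bar{n}>(1-\eta)\bar{n}>0$, so monotonicity of $g''$ makes the bracket positive; thus $L(\cdot,\bar{n})$ is strictly convex on $(1/2,1)$, and by continuity convex on $[1/2,1]$.

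To glue the two pieces, I would note that $L(\cdot,\bar{n})$ is continuous at $\eta=1/2$ (both branches equal $0$ there) and compare the one-sided derivatives: the left derivative is $0$, whereas
\[
  \left.\frac{\partial}{\partial\eta}L(\eta,\bar{n})\right|_{\eta=1/2^{+}}
  =\bar{n}\bigl[g'(\bar{n}/2)+g'(\bar{n}/2)\bigr]=2\bar{n}\,g'(\bar{n}/2)>0,
\]
since $g'>0$ on $(0,\infty)$. A continuous function that is convex on each of two abutting closed intervals, and whose left derivative at the shared endpoint does not exceed its right derivative there, is convex on the union (the one-sided derivatives then form a single nondecreasing function on the whole interval). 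Applying this with the intervals $[0,1/2]$ and $[1/2,1]$ yields convexity on all of $[0,1]$.

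I expect the only genuine subtlety to be this last gluing step: $L(\cdot,\bar{n})$ has a corner at $\eta=1/2$, and one must verify that the corner opens \emph{upward}, which is exactly the inequality $L'(1/2^{-})\le L'(1/2^{+})$ checked above. Everything else—the formulas for $g'$, $g''$ and the monotonicity of $g''$—is routine calculus. (One may also note $g''((1-\eta)\bar n)\to-\infty$ as $\eta\to 1^{-}$, which only strengthens the positivity of the second derivative near that endpoint and causes no trouble.)
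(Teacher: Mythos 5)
Your proof is correct and takes essentially the same route as the paper's: both reduce to the second-derivative computation $\partial_\eta^2 L = \bar{n}^2\bigl[g''(\eta\bar{n})-g''\bigl((1-\eta)\bar{n}\bigr)\bigr]$ on $(1/2,1]$, which is nonnegative because $g''(x)=-1/\bigl(x(x+1)\bigr)$ is increasing, together with the trivial observation that $L\equiv 0$ on $[0,1/2]$. The one point where you go beyond the paper is the gluing at $\eta=1/2$: the paper checks convexity on each piece separately and stops there, whereas you also verify that the left derivative ($0$) does not exceed the right derivative ($2\bar{n}\,g'(\bar{n}/2)>0$) at the breakpoint, which is the step actually needed to pass from convexity on the two subintervals to convexity on all of $[0,1]$.
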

\begin{proof}
  Since $L(\eta,\bar{n})$ is the constant zero and is hence convex in
  $\eta$ for $\eta\in\left[0,\frac{1}{2}\right]$, 
  we only need to check
  convexity for $\eta\in\left(\frac{1}{2},1\right]$. In the latter region,
  \begin{equation}
  L(\eta,\bar{n}) = g(\eta\bar{n})-g((1-\eta)\bar{n}), \quad
  \eta\in\left(\frac{1}{2},1\right],
  \end{equation}
  and its second derivative with respect to $\eta$ can be computed:
  \begin{IEEEeqnarray}{rCl}
    \lefteqn{\frac{\d^2 L(\eta,\bar{n})}{d \eta^2}}~~~~ \nonumber\\ & = &
    \bar{n}^2 \left( 
      \frac{1}{(1+(1-\eta)\bar{n})(1-\eta)\bar{n}} -
      \frac{1}{(1+\eta\bar{n})\eta \bar{n}}
    \right)\IEEEeqnarraynumspace \\
    & \ge & 0,~~~~~~~~~~~~~~~~~~~~~~~~~~~~~~~\eta\in\left(
      \frac{1}{2},1\right]. 
  \end{IEEEeqnarray}
  Hence we conclude that $L(\eta,\bar{n})$ is convex in $\eta$ on
  $[0,1]$ and for every $\bar{n}>0$.
\end{proof}

\begin{lemma}
  The multiple-mode lower bound
  \begin{equation}\label{eq:bigL}
    L^\textnormal{M}(\boldsymbol{\eta},\bar{n}) \triangleq
    \max_{\substack{ \bar{n}_i\ge 0,\quad i\in\{1,\ldots,m\}\\ 
      \sum_{i=1}^m \bar{n}_i = 
      \bar{n}}} L(\eta_i,\bar{n}_i)
  \end{equation}
  is both convex and Schur-convex in $\boldsymbol{\eta}$.
\end{lemma}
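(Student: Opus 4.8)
The plan is to read off both properties from Lemma~\ref{lem:convex} by slicing along the allocation variable. Write
\[
  L^\textnormal{M}(\boldsymbol{\eta},\bar{n}) \;=\; \max_{\bar{\boldsymbol{n}}\in\Delta}\; \sum_{i=1}^m L(\eta_i,\bar{n}_i),
\]
with $\Delta \triangleq \{\bar{\boldsymbol{n}}\colon \bar{n}_i\ge 0,\ \sum_{i=1}^m\bar{n}_i=\bar{n}\}$ a fixed compact simplex that does \emph{not} depend on $\boldsymbol{\eta}$; since $L(\eta,\bar n)$ is continuous, the maximum is attained. The essential structural observation is that for each \emph{fixed} allocation $\bar{\boldsymbol{n}}\in\Delta$, the map $\boldsymbol{\eta}\mapsto\sum_{i=1}^m L(\eta_i,\bar{n}_i)$ is convex on $[0,1]^m$: it is a sum of the one-variable functions $\eta_i\mapsto L(\eta_i,\bar{n}_i)$, each convex by Lemma~\ref{lem:convex}, and a sum of convex functions of separate coordinates is convex in the full vector.

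Convexity of $L^\textnormal{M}(\cdot,\bar{n})$ then follows because it is the pointwise supremum, over $\bar{\boldsymbol{n}}\in\Delta$, of this family of convex functions, and a pointwise supremum of convex functions is convex. (Equivalently, one can give the direct three-line argument: for $\boldsymbol{\eta}=\lambda\boldsymbol{\eta}^{(1)}+(1-\lambda)\boldsymbol{\eta}^{(2)}$ take the optimal allocation $\bar{\boldsymbol{n}}^\star$ for $\boldsymbol{\eta}$, apply coordinatewise convexity of $L$, and note that $\bar{\boldsymbol{n}}^\star$ is a feasible — though not necessarily optimal — allocation for $\boldsymbol{\eta}^{(1)}$ and for $\boldsymbol{\eta}^{(2)}$.) For Schur-convexity, first note that $L^\textnormal{M}(\cdot,\bar{n})$ is a \emph{symmetric} function of $\boldsymbol{\eta}$: permuting the entries of $\boldsymbol{\eta}$ merely relabels the parallel modes in \eqref{eq:parallel}, and the corresponding permutation of $\Delta$ leaves the maximum unchanged. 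A function that is both symmetric and convex is Schur-convex; if a self-contained argument is wanted, recall that $\boldsymbol{\eta}'\prec\boldsymbol{\eta}$ implies $\boldsymbol{\eta}'$ lies in the convex hull of the $m!$ permutations of $\boldsymbol{\eta}$ (Rado's theorem), so convexity gives $L^\textnormal{M}(\boldsymbol{\eta}',\bar{n})\le\max_{\pi}L^\textnormal{M}(\pi\boldsymbol{\eta},\bar{n})$, and symmetry makes every term on the right equal to $L^\textnormal{M}(\boldsymbol{\eta},\bar{n})$.

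There is no real obstacle here; the two points that need care are (i) invoking Lemma~\ref{lem:convex} only in the form actually available, namely convexity of $L(\cdot,\bar{n}_i)$ for a \emph{fixed} photon number $\bar{n}_i$ — which is exactly what the fixed-$\bar{\boldsymbol{n}}$ slicing supplies, so joint convexity of $L$ in $(\eta,\bar n)$ is never needed — and (ii) for the Schur-convexity half, recalling (rather than re-deriving by the Schur--Ostrowski derivative criterion, which would be awkward given the maximization) that symmetry together with convexity implies Schur-convexity.
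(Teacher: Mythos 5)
Your proof is correct and follows essentially the same route as the paper: the paper gives exactly the ``direct three-line argument'' you mention in your parenthetical (evaluate the optimal allocation $\bar{\boldsymbol{n}}^*$ for the midpoint, apply Lemma~\ref{lem:convex} coordinatewise, and lower-bound the two outer maxima by that feasible choice), and likewise disposes of Schur-convexity by citing that a symmetric convex function is Schur-convex. Your repackaging as a pointwise supremum of convex functions over the $\boldsymbol{\eta}$-independent simplex $\Delta$ is a clean equivalent formulation, not a genuinely different argument.
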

\begin{proof}
  First note that $L^\textnormal{M} (\boldsymbol{\eta},\bar{n})$ is
  symmetric in the 
  elements of $\boldsymbol{\eta}$, hence convexity implies
  Schur-convexity \cite{marshallolkin79}. To prove convexity, consider
  any two vectors 
  $\boldsymbol{\eta}^a$, $\boldsymbol{\eta}^b$ and their mean
  $\boldsymbol{\eta}^c \triangleq
  (\boldsymbol{\eta}^a+\boldsymbol{\eta}^b)/2$. Suppose that
  $\bar{\boldsymbol{n}}^*$ achieves $L^\textnormal{M}
  (\boldsymbol{\eta}^c,\bar{n})$: 
  \begin{equation}
    L^\textnormal{M} (\boldsymbol{\eta}^c, \bar{n}) = \sum_{i=1}^m
    L(\eta_i^c,\bar{n}_i^*).
  \end{equation}
  We have
  \begin{IEEEeqnarray}{rCl}
    \lefteqn{\bigl( L^\textnormal{M} (\boldsymbol{\eta}^a, \bar{n}) +
      L^\textnormal{M} (\boldsymbol{\eta}^b, \bar{n}) \bigr)/2}~~~~~~
    ~~~~~~ 
    \nonumber\\ 
    & \ge & \left( \sum_{i=1}^m
    L(\eta_i^a,\bar{n}_i^*) + \sum_{i=1}^m
    L(\eta_i^b,\bar{n}_i^*) \right)/2 \label{eq:schur1}
  \IEEEeqnarraynumspace \\
    & \ge & \sum_{i=1}^m
    L(\eta_i^c,\bar{n}_i^*) \label{eq:schur2} \\
    & = & L^\textnormal{M} (\boldsymbol{\eta}^c,
    \bar{n}). \label{eq:schur3} 
  \end{IEEEeqnarray}
  Here: \eqref{eq:schur1} follows by lower-bounding the maxima over
  $\bar{\boldsymbol{n}}$ with the specific choice
  $\bar{\boldsymbol{n}}=\bar{\boldsymbol{n}}^*$; and \eqref{eq:schur2}
  by the 
  convexity of $L(\cdot,\bar{n})$ as in Lemma~\ref{lem:convex}. Hence
  $L^\textnormal{M}(\boldsymbol{\eta}, \bar{n})$ is convex in
  $\boldsymbol{\eta}$. 
\end{proof}

We are now ready to prove a lower bound on the private capacity of the
multiple-mode wiretap bosonic channel under turbulence which can be
expressed using $\E{\mathsf{T}_{ab}^\dag \mathsf{T}_{ab}}$.
\begin{theorem}\label{thm:turbulencelower}
  Let $\{\mu_1,\ldots,\mu_m\}$ denote the diagonal elements of
  $\E{\mathsf{T}_{ab}^\dag \mathsf{T}_{ab}}$, then
  \begin{equation}
    C_\textnormal{P}^\textnormal{M} (\mathsf{T},\bar{n}) \ge
    L^\textnormal{M} (\boldsymbol{\mu},\bar{n}),
  \end{equation}
  where $L^\textnormal{M} (\cdot,\cdot)$ is defined as in
  \eqref{eq:bigL}.
\end{theorem}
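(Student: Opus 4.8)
The plan is to combine the explicit capacity lower bound $C_\textnormal{P}^\textnormal{M}(\mathsf{T},\bar{n}) = \E{L^\textnormal{M}(\boldsymbol{H},\bar{n})}$-type inequality with the two structural properties just established---convexity and Schur-convexity of $L^\textnormal{M}(\cdot,\bar{n})$---together with a majorization relation between the random eigenvalue vector $\boldsymbol{H}$ of $\mathsf{T}_{ab}^\dag\mathsf{T}_{ab}$ and the deterministic diagonal vector $\boldsymbol{\mu} = \E{(\text{diag of }\mathsf{T}_{ab}^\dag\mathsf{T}_{ab})}$. First I would write, using the lower bound \eqref{eq:lower} applied modewise to the parallel decomposition of Theorem~\ref{thm:equivalent} and the achievability of independent coding,
\begin{equation}
  C_\textnormal{P}^\textnormal{M}(\mathsf{T},\bar{n}) \ge \E{L^\textnormal{M}(\boldsymbol{H},\bar{n})}.
\end{equation}
It then suffices to show $\E{L^\textnormal{M}(\boldsymbol{H},\bar{n})} \ge L^\textnormal{M}(\boldsymbol{\mu},\bar{n})$.

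Next I would split this into two inequalities chained through the expected eigenvalue vector $\boldsymbol{\lambda} \triangleq \E{\boldsymbol{H}}$ (eigenvalues sorted in decreasing order). For the first inequality, since $L^\textnormal{M}(\cdot,\bar{n})$ is convex in $\boldsymbol{\eta}$, Jensen's inequality gives $\E{L^\textnormal{M}(\boldsymbol{H},\bar{n})} \ge L^\textnormal{M}(\E{\boldsymbol{H}},\bar{n}) = L^\textnormal{M}(\boldsymbol{\lambda},\bar{n})$. For the second inequality, I would invoke the standard fact (Schur--Horn / Ky Fan, see \cite{marshallolkin79}) that for any Hermitian matrix the vector of diagonal entries is majorized by the vector of eigenvalues; taking expectations and using linearity, the deterministic vector $\boldsymbol{\mu}$ of diagonal entries of $\E{\mathsf{T}_{ab}^\dag\mathsf{T}_{ab}}$ is majorized by $\boldsymbol{\lambda} = \E{\boldsymbol{H}}$, i.e. $\boldsymbol{\mu} \prec \boldsymbol{\lambda}$. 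Since $L^\textnormal{M}(\cdot,\bar{n})$ is Schur-convex, $\boldsymbol{\mu}\prec\boldsymbol{\lambda}$ implies $L^\textnormal{M}(\boldsymbol{\lambda},\bar{n}) \ge L^\textnormal{M}(\boldsymbol{\mu},\bar{n})$. Concatenating the two inequalities yields the claim.

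The main obstacle---really the only nonroutine point---is justifying the majorization $\boldsymbol{\mu}\prec\boldsymbol{\lambda}$ under the expectation. The subtlety is that the diagonal-versus-eigenvalue majorization holds realization-by-realization for $\mathsf{T}_{ab}^\dag\mathsf{T}_{ab}$, but the sorted eigenvalue map is not linear, so one cannot directly push the expectation through sorted vectors. The clean route is to note that majorization $\boldsymbol{d}\prec\boldsymbol{h}$ is equivalent to $\sum_i \phi(d_i) \le \sum_i \phi(h_i)$ for all convex $\phi$ (with equality of sums), or equivalently that $\boldsymbol{d}$ lies in the convex hull of permutations of $\boldsymbol{h}$; for each realization the diagonal vector of $\mathsf{T}_{ab}^\dag\mathsf{T}_{ab}$ lies in the convex hull of permutations of $\boldsymbol{H}$, hence so does any expectation of such diagonal vectors, placing $\boldsymbol{\mu}$ in the convex hull of permutations of... this is where care is needed, since the hull depends on $\boldsymbol{H}$. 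The correct statement to use is that $\boldsymbol{\mu}\prec\boldsymbol{\lambda}$ follows because, for every convex $\phi$, $\sum_i \phi(\mu_i) \le \E{\sum_i \phi(H_i)}$ need not hold directly; instead I would argue via the partial-sum characterization: $\sum_{i=1}^k \mu_{[i]} \le \E{\sum_{i=1}^k H_{[i]}} = \sum_{i=1}^k \lambda_{[i]}$, where the first inequality uses that $\sum_{i=1}^k \mu_{[i]}$ is bounded by the largest-$k$-diagonal-sum which for each realization is $\le \sum_{i=1}^k H_{[i]}(\omega)$, and the total sums agree since trace is linear. I would present this partial-sum argument explicitly, as it is the one genuinely delicate step; everything else (Jensen, Schur-convexity, the modewise lower bound) is an immediate appeal to the lemmas and cited results above.
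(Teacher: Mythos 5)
Your proposal is correct, but it chains the two key inequalities in the opposite order from the paper, and that ordering is what creates the delicate step you spend most of your space on. The paper goes $\E{L^\textnormal{M}(\boldsymbol{H},\bar{n})} \ge \E{L^\textnormal{M}(\boldsymbol{M},\bar{n})} \ge L^\textnormal{M}(\boldsymbol{\mu},\bar{n})$: it first applies Schur-convexity \emph{realization-by-realization}, where the Schur--Horn fact that eigenvalues majorize diagonal entries holds deterministically for each sample of $\mathsf{T}_{ab}^\dag\mathsf{T}_{ab}$, and only then applies Jensen to the random diagonal vector $\boldsymbol{M}$. Because extracting the diagonal is a linear operation, $\E{\boldsymbol{M}}=\boldsymbol{\mu}$ exactly, so Jensen closes the argument with no further work. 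Your route (Jensen first, on the sorted eigenvalue vector, then majorization of $\boldsymbol{\mu}$ by $\boldsymbol{\lambda}=\E{\boldsymbol{H}}$) forces you to prove that majorization survives the expectation of a nonlinear (sorting) map; your partial-sum argument for this --- $\sum_{i=1}^k \mu_{[i]} \le \E{\max_{|S|=k}\sum_{i\in S}M_i} \le \E{\sum_{i=1}^k H_{[i]}} = \sum_{i=1}^k\lambda_{[i]}$, with equality of totals by linearity of the trace --- is valid, and you were right to flag the convex-hull version as unusable since the hull varies with the realization. So both proofs are sound and use the same two lemmas; the lesson is simply that putting the pointwise majorization step before the Jensen step lets the expectation commute with a linear map (the diagonal) rather than a nonlinear one (the ordered spectrum), which eliminates the only nontrivial part of your argument.
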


\textit{Remarks:} 
  As discussed in Section~\ref{sub:simplification}, the choice of
  basis for $\mathsf{T}$ does not affect the private capacity of our
  channel model, so Theorem~\ref{thm:turbulencelower} holds when
  $\boldsymbol{\mu}$ denotes the diagonal terms of $\E{\mathsf{T}_{ab}^\dag
    \mathsf{T}_{ab}}$ in \emph{any} basis. In particular, it holds if
  $\boldsymbol{\mu}$ denotes the eigenvalues of
  $\E{\mathsf{T}_{ab}^\dag \mathsf{T}_{ab}}$, and this choice of $\boldsymbol{\mu}$ provides the tightest bound obtainable in this manner.  Toward that end, the turbulence calculations from \cite{ChandrasekaranShapiro12} will permit this lower bound to be evaluated for transmitters that use focused-beam, Hermite-Gaussian, or Laguerre-Gaussian spatial modes.

\begin{proof}
  Let $\{M_1,\ldots,M_m\}$ denote the random diagonal elements of the
  random matrix $\mathsf{T}_{ab}^\dag \mathsf{T}_{ab}$.
  We have the following chain of inequalities:
  \begin{IEEEeqnarray}{rCl}
    C_\textnormal{P}^\textnormal{M} (\mathsf{T},\bar{n}) & \ge &
    \E{L^\textnormal{M} (\boldsymbol{H},\bar{n})} \\
    & \ge & \E{L^\textnormal{M}
      (\boldsymbol{M},\bar{n})} \label{eq:turblower1}\\ 
    & \ge & L^\textnormal{M}
    (\boldsymbol{\mu},\bar{n}). \label{eq:turblower2} 
  \end{IEEEeqnarray}
  Here: \eqref{eq:turblower1} follows by the Schur-convexity of
  $L^\textnormal{M} (\cdot,\bar{n})$ and the fact that the eigenvalues
  $\{H_1,\ldots,H_m\}$ majorize the diagonal elements
  $\{M_1,\ldots,M_m\}$; and \eqref{eq:turblower2} follows by the
  (normal) convexity of $L^\textnormal{M} (\cdot,\bar{n})$.
\end{proof}

\section*{Acknowledgements}
This research was supported by the DARPA InPho program under ARO Grant No. W911NF-10-1-0416, and by the NSF IGERT program Interdisciplinary Quantum Information Science and Engineering (iQuISE).
%%%%%%%%%%%%%%%%%%%%%%%%%%%%%%%%%%%%%%%%%%%

\end{document}